\newtheorem{theorem}{Theorem}
\newtheorem{lemma}{Lemma}
\begin{document}

\title{An approach towards debiasing user ratings}

\author{\IEEEauthorblockN{Abhinav Mishra}
\IEEEauthorblockA{School of Computer Science\\
Georgia Institute of Technology\\
Atlanta, Georgia 30332--0250\\
Email: amishra41@gatech.edu}
}

\maketitle

\begin{abstract}
	With increasing importance of e-commerce, many  websites 
	have emerged where users can express their opinions about products, such as
	movies, books, songs, etc.  Such interactions can be modeled as
	bipartite graphs where the weight of the directed edge from a
	user to a product denotes a rating that the user imparts to the
	product.  These graphs are used for recommendation systems and discovering
	most reliable (trusted) products.  For these applications, it is important
	to capture the bias of a user when she is rating a product.  Users have
	inherent bias---many users always impart high ratings while many others
	always rate poorly.  It is necessary to know the bias of a reviewer while
	reading the review of a product.  It is equally important to compensate for
	this bias while assigning a ranking for an object.  In this paper, we
	propose an algorithm to capture the bias of a user and then subdue it to
	compute the true rating a product deserves.  Experiments
	show the efficiency and effectiveness of our system in capturing the bias of
	users and then computing the true ratings of a product.
\end{abstract}

\begin{figure}[!h]
\centering
\includegraphics[scale=0.64]{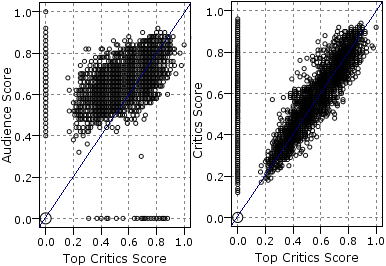}
\label{fig:error}
\caption{Audience in general have propensity to give high scores with respect to the top critics score.
 However, average score of critics appears to be more correlated with the score of top critics.}
\end{figure}  

\section{Introduction}
\label{sec:intro}

With the growing popularity of the Internet and e-commerce, many new websites
 have emerged where users can express their opinions about products,
such as movies, books, songs, etc.  Amazon, for example,
stores millions of products that users can search, browse, buy and rate.
Consequently, it records millions of hits per day and is one of the most
important product websites that has ever existed.  Similarly, IMDB
 lets users get informed about movies and also allows them
to read other users' reviews as well as rate movies themselves.

Most such product rating websites can be  modeled as (directed)  graphs
where users and products are modeled as nodes and the rating of a user for a
product is captured by adding a directed edge from the user node to the product
node.  The graph is bipartite as only users rate products.  The weight on
the edge captures the ``rating'' that the user provides to the product.  The
ratings can be used to rank products as well as recommend them to other users.

Users have inherent bias while rating a product---many users always impart high
ratings while many others always rate poorly.  For applications in
recommendation-based systems where products are ranked according to their
ratings, it is important to capture the bias of a user when she is rating a
product. In Figure 1, we show the average ratings given to the movies by
audience and critics. These ratings are collected from
{\url{www.rottentomatoes.com}} and are available as a public
dataset\footnote{\url{http://www.grouplens.org/node/462}}.  Values on the axes
are mostly missing values in the dataset.  The website
\url{www.rottentomatoes.com} also scores the movies by aggregating the reviews
from top critics; these scores are deemed more trustworthy. We can observe that
audience score is overwhelmingly positive with respect to the top critics score,
where as the average score of all the critics is highly correlated with those of
top critics.  This phenomenon is well observed \cite{chenkdd11} and is
attributed to a difference between quality and user rating.  It implies that
user ratings have inherent bias and are not trustworthy. 

While global ratings (for example, computed as an average of all ratings) are
less prone to individual biases of users, individual ratings are susceptible.
Many users read the ratings and reviews of other users when
trying to decide whether to buy a product or not.  Thus, it is necessary to
know the bias of a reviewer while reading the review of a product.  It is
equally important to compensate for this bias while assigning a ranking to an
object.  It is also observed that for the  objects that receive very few ratings,  a slight bias
on the part of users can lead to a significant change in their ratings \cite{chenkdd11}. In experiments,
we observe such phenomenon as well.

Product ratings usually follow the J-shaped distribution~\cite{bias09}, i.e.,
most of the ratings are overwhelmingly positive along with a small but
significant number of high negatives. In Figure 1, we  infer similar behavior as well. These ratings do not follow the expected
normal distribution.  This also indicates that the average rating may not be a
right metric to judge product quality because of
bias~\cite{ec06,bias09}. In this paper we propose a technique which captures
user bias and factors it while computing the rating of an object. We follow the intuition that if a user is known to 
exaggerate her ratings, i.e., user is positively biased, then we  compensate for her exaggeration by reducing the ratings.
Likewise if a user is known to give low scores, then we increase the ratings. 

Specifically,
our main contributions are as follows:
\begin{enumerate}
\item We propose an iterative technique, where we model both bias and true rating of an item and
 derive a mutually-recursive formulation. 
\item We show that our technique converges to a unique solution, i.e., it 
does not depend on the initial conditions. Further, we prove that error after any iteration is bounded, and
thus, maximum number of iterations required can be fixed apriori.

\item We conduct an in-depth evaluation with the earlier work, current schemes used in the modern system, 
and more specifically we show that our technique adheres well with the ground truth  in terms of absolute error, error in ranking, etc.
\item We also analyze the behavior and error associated with items that receive fewer ratings. We conjectured
that such items are likely to be impacted more by user bias. We experimentally
verify  and analyze this phenomenon in detail.
\end{enumerate}

In section \ref{sec:rwork}, we discuss causes of user bias and related approaches for debiasing.
In section \ref{sec:algo}, we formally define bias of a user and true rating associated with items. We present the algorithm, its 
interpretation, iterative technique to compute it, and finally we discuss the complexity of the algorithm. 
We show the necessary properties such as convergence, uniqueness, and the error-bound in Section \ref{sec:prop}.
 In Section \ref{sec:exp},
we give the experimental results such as performance of our algorithm and other approaches (Section \ref{sec:eval}),
effect of sparse ratings, analysis of ranking, convergence analysis (Section \ref{sec:qual}). Finally, we conclude in Section \ref{sec:conc}.

\section{Related Work}
\label{sec:rwork}
In this section, we review prior work about the origin of bias and argue that users are indeed biased.
We then give a overview of related approaches to compute bias.

\subsection{Sources of Bias}

The
bias we usually refer to is known as \emph{cognitive bias}.  It is the
human inability to judge the situation based on evidences alone and is influenced by cognitive factors as well.
It has been studied in great detail in the area of
cognitive science and social psychology. This was introduced by Tversky and Kahneman \cite{Kahneman_72}.
In an experiment \cite{tversky-kahneman-74}, they presented a numerical estimation of $8!$ (shown in a different manner) to two groups.
Two groups came up with a significantly different estimates.
When we rely too heavily on few parameters,
the bias is known as \emph{anchoring bias}.

\subsection{Bias in product ratings}
In this subsection, we focus on biases that affects product ratings.
 First we consider
\emph{affect heuristic}\footnote{It is a heuristic in which current affect influences decisions.}.
In general, if a reviewer is presented with large positive ratings and reviews,
then she is more likely
to give positive scores \cite{Carlson2010}. Also sorting the reviews based on
ratings 
triggers the affect heuristic as well. \emph{Purchasing bias} \cite{bias09} is another example, where people who value a product
highly are likely to buy and will not usually write a poor review.
Even if we assume that an online rating system (for a user to rate)  has been built perfectly including capturing the minute details of an object such as
packaging of the object, color, etc., a truthful user may still not conceive it perfectly. For example, it has been observed that user is likely to give high
score on Amazon than on eBay \cite{wolf2011}. This is because human judgments are subject to a number of bias, including focusing more on few parameters (Amazon emphasizes strength parameter \cite{wolf2011}) of the evidence  \cite{Kahneman_72}.


\subsection{Existing Approaches}

In this section, we review related algorithms proposed to compute bias.  In all
the algorithms, the main goal was to identify the \emph{net bias} rather than
computing different biases for each user. 
The first  such work \cite{kdd06} noticed that product evaluation depends
greatly on the evaluator. They proposed an algorithm to compute bias and showed
that  it outperforms standard statistical measures. However, their technique looks locally to 
compute true rating (\cite{kdd06} refer it as consensus), i.e., when source bias is not 
considered. This will work if these biased
ratings can give an unbiased estimate of true rating. However, this is not true.
Ranking algorithms like HITS \cite{hits} and PageRank \cite{pagerank}
look at the global structure, e.g., opinion of a better node should be given
more weight. In our case, opinion of a highly biased user should not be
directly used to calculate true rating. Recently, \cite{mcgl} proposed
many heuristics to compute true ratings based on local approaches.
Later, \cite{sdm07} authors improved their existing system by considering the global
structure as well. 
In experiments, we consider this algorithm~\cite{sdm07} as the  benchmark and give in-depth comparison with our technique.


\section{Algorithm}
\label{sec:algo}

The system is modeled as a directed bipartite graph $G = (V = U \cup O, E
\subseteq U \times O)$ where $U$ and $O$ denote the set of users and objects
(products) respectively.  The only edges allowed are those from $U$ to $O$,
i.e., users rate objects.  Thus, users have only outgoing edges and objects
have only incoming edges.  We constraint the user to give at most one rating to
a product.

The weight of a directed edge, denoted by $w_{ij}$, captures the rating given by
user $u_i \in U$ to object $o_j \in O$.  It is assumed that if the weight is
high, the user is rating the object positively; on the other hand, if it is low,
it is a negative opinion.  In this work, we assume the weights
to be between $0$ and $1$, i.e., $w_{ij} \in [0,1]$.  The \emph{neutral} rating
is somewhere in between, but not necessarily at $0.5$.

The number of edges in such a graph
may be large,
even if the number of users and objects are not very large.
In most real life graph datasets, however, users rate only a few
products, and hence, the size of the graph remains within a small factor of the
number of nodes.

The standard method to compute the net rating of any item $i$ is to consider the
average of all the ratings item $i$ receives ($avg_j\{w_{ji}\}$).  The method
may have some filters such as frequent users, but in general, the mean is
considered to be the best estimate. In our work, we aim to remove bias from each
individual rating, and then we take the mean. If $f(w_{ji})$ is an estimate of
the bias-free rating of a user to a product, then according to our formulation,
$avg_j\{f(w_{ji})\}$ is the net bias-free rating of the item $i$.

\subsection{Model}
\label{sec:def}

In this section, we precisely define the terms bias of a user $u_i$, denoted by
$b_i$, and true rating of an object $o_j$, denoted by $r_j$.
Further, $d^+_i$ denotes the set of objects a user $u_i$ rates and
$d^-_j$ denotes the set of users that an object $o_j$ receives ratings from.
Then,
\begin{align}
	\sum_{u_i \in U}\left|d^+_i \right|= \sum_{o_j \in O}\left|d^-_j \right|.
\end{align}

If we know that a user is gives higher ratings than usual, then we expect 
her ratings to be less trustworthy. In fact, we try to factor in such exaggeration in
ratings while coming up with a more realistic rating. We refer to such exaggeration (either high or low)
 in ratings as \emph{bias}, and the realistic rating as \emph{true rating}.  In our approach, we compute 
bias of each user assuming that we know the true ratings of all object.  Likewise, we compute true rating of an object by assuming that we know the 
bias of all object. We create an simple iterative system of two mutually
recursive variables,
bias and true rating.
The precise details are presented next.

\subsubsection{Bias}

The \emph{bias} of a user is defined as the average deviation of ratings given
by her from the true ratings:
\begin{align}
	\label{eq:bias}
	b_i = \frac{1}{\left|d^+_i \right |} \sum_{o_j \in d^+_i} {\big( w_{ij} - r_j \big)}.
\end{align}
In the above equation, $w_{ij} - r_j$ measures the deviation of the user $i$ while rating 
an object $j$. Here $w_{ij}$ denotes the rating given by the user $i$ to the object $j$, and $r_j$
is the true rating of the object $j$.
 For now, we assume that the true ratings are known. 
Later, we show that the algorithm converges to a unique solution, irrespective
of the initial values of true ratings assumed.

If a user gives higher ratings to objects than the true ratings those objects
deserve, the bias of the user will be positive, and the user is said to be
positively biased. Such intuition is clearly captured in Eq.~(\ref{eq:bias}).
Bias of a user can thus be seen as expected exaggeration in her rating an object.
Assuming the true ratings also lie between $[0,1]$, the
bias values range from $-1$ to $1$, i.e., $b_i \in [-1,1]$. 
If $b_i<0$, it means that the user $i$ usually gives lower scores, and likewise
if $b_i>0$ indicates that the user give higher ratings.

It can be argued that  products also have biases, i.e., some
products are likely to get high ratings and some are expected to get poor
ratings.  For example, a good movie is expected to attract high ratings from the
viewers.  If $\delta$ be the average rating of all the products and $b_o$ be the
bias of object $o$, then clearly, true rating of the object $o$ is $\delta+b_o$.
Therefore, the true rating implicitly captures the product bias, and thus we do not have to
model item bias separately.

\subsubsection{True Rating}

If we know that the bias of a user, i.e., she gives high/low scores than the average, then
we can adjust  her rating by factoring in bias. For example, if a user has a positive bias, then her ratings are more than
 what they should be.
Hence, to offset the effect of bias, some fraction of the  ratings
should be reduced.  Similarly, if a user is negatively biased, some fraction
of the ratings should be increased. Intuitively, the transformed weight would be the user's rating minus the bias, i.e., $(w_{ij} -  b_i)$.
But this formulation does not guarantee a convergence. Therefore, we make a relaxation to the formulation.


A simple
relaxation for a new rating, denoted by $w'_{ij}$,  after factoring bias can be written as:
\begin{align}
	w'_{ij} = w_{ij} - \alpha \cdot b_i.
\end{align}
Here we increase the rating, if bias is negative, and decrease if bias is
positive.
We restrict $\alpha$ to $(0,1)$. For technical reasons, such 
as convergence, we are forced to keep $\alpha<1$. This will be made clear in later sections.  
If $\alpha$ is high, we remove more bias and likewise, if $\alpha$ is low, we change the weight by a smaller amount.
At the same time, keeping a parameter is also helpful in many situations. For example, if we know that there are few trusted users such as
top critics, who are considered as unbiased users. In such a scenario, we can keep $\alpha=0$ for such users. Similarly, if we know that
there are few users who are unreliable, then we can keep a high $\alpha$ for such users.

In the above formulation, it should be noted that the new weight can be outside
the range $[0,1]$.  To overcome this undesirable property, we define a hard
cut-off at the boundaries, i.e., $0$ and $1$.
\begin{align}
	\label{eq:w}
	w'_{ij} =
	\left\{
		\begin{array}{cl}
			0 & \text{ if } w_{ij} - \alpha \cdot  b_i \leq 0 \\
			1 & \text{ if } w_{ij} - \alpha \cdot b_i \geq 1 \\
			w_{ij} - \alpha \cdot b_i & \text{ otherwise }
		\end{array}
	\right.
\end{align}
This ensures that whenever the new weight is more than $1$, it is
reset to $1$ and similarly, whenever it is negative, it is reset to
$0$.
Eq.~\eqref{eq:w} can be succinctly written as
\begin{align}
	\label{eq:tw}
	w'_{ij} = \max\big\{0, \min\{1, w_{ij} - \alpha \cdot b_i\}\big\}.
\end{align}
The above function to compute $w'_{ij}$ is our de-biasing function (mentioned as
$f(\cdot)$ earlier), and
it returns the unbiased ratings.
Using unbiased ratings, the \emph{true rating} of an object is defined as the
average of the unbiased ratings:
\begin{align}
	\label{eq:rating}
	r_j = \frac{1}{|d^-_j|} \sum_{u_i \in d^-_j} w'_{ij}.
\end{align}

Note that the above definitions, as given in Eq.~\eqref{eq:bias},
Eq.~\eqref{eq:tw} and Eq.~\eqref{eq:rating} depend on each other.  Later, we show how to solve such a system of interdependent equations using an
iterative algorithm.

\subsubsection{Connection with Co-reference and Co-citation Matrices}

We now analyze the bias and true rating formulation using matrix algebra, and see an intriguing connection
with the co-reference and co-citation matrices.
Notice that there is a non-linearity involved in 
the formulation of true rating in the form of \emph{max} and \emph{min} operators. For simplicity, we ignore these terms in the current section, but
later we given an unconditional proof of
convergence and uniqueness.

Suppose $\vec{b}$ and $\vec{r}$ are the bias and true rating vectors and $W$ is
the adjacency matrix.  Let $D^+=diag(\left | d^+_1 \right | ,\left | d^+_2
\right | ,\left | d^+_3 \right |, \ldots)$ and  $D^-=diag(\left | d^-_1 \right |
,\left | d^-_2 \right | ,\left | d^-_3 \right |, \ldots)$ be the two diagonal degree
matrices.
Further, let $C$ be the (symmetric) connection matrix with $0/1$
entries, i.e., if a user rates an item, there is a $1$ entry in the
matrix; otherwise, it is $0$.  We can now write the bias and true rating vectors as follows:
\begin{align}
	\vec{b}& =(D^+)^{-1}(W\vec{1}-C \vec{r}) \\
	\vec{r} &=(D^-)^{-1}(W^T\vec{1}-\alpha C^T \vec{b})
\end{align}
where $(D^*)^{-1}$ denotes the Moore-Penrose pseudoinverse of the
degree matrix.

The analytical solution for bias can be written as:
\begin{align}
	\vec{b}& =\left (I-\alpha \cdot (D^+)^{-1}C(D^-)^{-1}C^T  \right ) ^{-1}\vec{m}\\
		&=\left (I-\alpha \cdot C'C^{''T}  \right ) ^{-1}\vec{m}
\end{align}
Here, $C'$ and  $C''$ are two degree normalized connection matrices and 
$\vec{m}=(D^+)^{-1}(W-C(D^-)^{-1}W^T)\vec{1}$. For $(I-\alpha \cdot C'C^{''T}  )^{-1}$ to exist,
we need $\alpha<1$. This constraint on $\alpha$  would ensure that the spectral radius is less than $1$. This is the reason we
put a constraint on $\alpha$. 

Notice that $ C'C^{''T}$ is the co-reference matrix, thus our formulation bears a significant similarity with the Randomized HITS \cite{ng01}, and
matrix equations are of similar forms. We can similarly analyze true rating vector which relies on the co-citation matrix.

\subsection{Iterative algorithm}

The iterative algorithm proceeds by updating the bias and rating of nodes in
each iteration.  Given the bias of users obtained in the previous iteration
(say, iteration $t$), the true ratings of objects are computed for the current
iteration (i.e., iteration $t+1$).  Using these values in turn, the biases are
again computed for the current iteration.  Denoting the bias of user $u_i$ and
true rating of object $r_j$ at iteration $t$ by $b_i^t$ and $r_j^t$
respectively,
\begin{align}
	r_j^{t+1} &= \frac{1}{|d^-_j|} \sum_{u_i \in d^-_j} \max\big\{0, \min\{1, w_{ij} - \alpha \cdot  b^{t}_i\}\big\} \\
	b^{t+1}_i &= \frac{1}{|d^+_i|} \sum_{o_j \in d^+_i} \big( w_{ij} - r^{t+1}_j \big)
\end{align}

The above two equations can be combined to obtain a single recursive equation for bias:
\begin{align}
	b^{t+1}_i &= \frac{1}{|d^+_i|} \sum_{o_j \in d^+_i} \Big(w_{ij} - \frac{1}{|d^{-}_j|} \sum_{u_i \in d^-_j} \max\big\{0, \min\{1, w_{ij} - \alpha \cdot  b^{t}_i\}\Big) \nonumber
\end{align}

The algorithm proceeds till it reaches the stopping criterion.  It may stop when
the number of iterations reach a pre-defined threshold or when the change in
values of bias and rating fall below another pre-defined threshold.  In
Section~\ref{sec:convergence}, we show that the errors fall below a given
threshold after at most a certain number of iterations.

The initial bias and rating values are assigned randomly.  In
Section~\ref{sec:uniqueness}, we show that the initial values do not matter, and
the final values obtained after convergence of the algorithm are unique.

\subsection{Complexity}

We now analyze the runtime complexity of the above iterative algorithm.  In
each iteration, the values of bias and rating are updated for all nodes.
Updating the rating of an object $o_j$ requires $O(|d^-_j|)$ time.  For all
objects, the amortized total time taken is $O(\sum_{o_j} |d^-_j|) = O(m)$, where $m$ is
the number of edges in the graph.  Similarly, updating the bias of user $u_i$
requires $O(|d^+_i|)$ time.  The amortized total time taken for all users is $O(\sum_{u_i}
|d^+_i|)$ which is again $O(m)$.  Thus, the time spent per iteration is $O(m)$.
If the algorithm is run for $k$ iterations, the total running time is $O(km)$.

Although $m$ can be as large as $|U|\cdot |O|$, for most practical datasets, the
users rate only a handful of objects.  Hence, in practice, $m \ll |U|\cdot |O|$.  In
general, $m$ is more close to $n = O( \max\{|U|, |O|\})$.  We show in
Section~\ref{sec:convergence} that the number of iterations $k$ is logarithmic
in the (inverse) error term.  Thus, the algorithm is very fast.

\section{Properties of the algorithm}
\label{sec:prop}

In this section, we characterize certain useful properties of the algorithm
including error bounding of values and convergence to unique values. 
In this section, we show that error after any iteration is bounded. This helps
in fixing the maximum number of required iterations apriori.
	Before we proceed to the proof, we  present four facts. Namely\\
\\
	\textbf{Fact 1:} If $w-\alpha \cdot  b_i\geq 1 \implies$
	$\max(0,\min(1,w-\alpha \cdot  b_i))=1.$\\ 
	\textbf{Fact 2:} If $w-\alpha \cdot b_i\leq 0 \implies$
          $ \max(0,\min(1,w-\alpha \cdot  b_i))=0.$\\ 
	\textbf{Fact 3:} If  $0\leq(w-\alpha \cdot b_i)\leq 1 \implies$
	$\max(0,\min(1,w-\alpha \cdot  b_i))=(w-\alpha \cdot b_i)$.\\
	\textbf{Fact 4:} $|b_i^*-b_j^*|\leq 2$.

We now prove the following lemma that states that the difference between the
ratings are bounded by the difference between the different values of bias.  It
is used in later proofs.

\begin{lemma}
	Suppose $b^m_i$ and $b^n_i$ are two values of bias of node $i$. These can
	be the values in 
	two different iterations $m$ and $n$ or can also be
	two different biases when the initial conditions are different, i.e., we start
	with two different seed values of bias. We prove the following inequality:
	\begin{align*}
	M&=| \max(0,\min(1,w-\alpha \cdot  b^m_i)-\max(0,\min(1,w-\alpha \cdot b^n_i)|   \\
		&\leq \alpha \cdot |b^m_i-b^n_i|
	\end{align*}
\end{lemma}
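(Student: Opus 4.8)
The plan is to recognize that $M$ is exactly the distance between the images of two points under the clamping map $\phi(x) = \max(0,\min(1,x))$, and that this map is non-expansive (1-Lipschitz). Concretely, I would set $x = w - \alpha \cdot b^m_i$ and $y = w - \alpha \cdot b^n_i$, so that $M = |\phi(x) - \phi(y)|$ and, crucially, $x - y = \alpha(b^n_i - b^m_i)$, giving $|x-y| = \alpha\,|b^m_i - b^n_i|$. The entire lemma therefore reduces to the single inequality $|\phi(x) - \phi(y)| \le |x - y|$; once that is in hand, substituting the value of $|x-y|$ immediately yields the stated bound.

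To prove $|\phi(x)-\phi(y)| \le |x-y|$ I would run exactly the case analysis that Facts 1--3 are set up to support, assuming without loss of generality $x \ge y$ (the quantity is symmetric in $x$ and $y$ because of the absolute value). Each of $x$ and $y$ lies in one of the three regions $(-\infty,0]$, $[0,1]$, $[1,\infty)$, on which Facts 2, 3, and 1 respectively fix the value of $\phi$. When both arguments lie in the same region the bound is immediate: $\phi$ is constant on each of the two outer regions, so the difference is $0$, and $\phi$ is the identity on the middle region, so the difference is exactly $|x-y|$.

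The only cases needing a short argument --- and the place I expect whatever mild work there is --- are the mixed ones, where the two points straddle a boundary and one is clamped while the other is not. For example, if $x \in [0,1]$ and $y \le 0$ then $|\phi(x)-\phi(y)| = x$ while $|x-y| = x-y \ge x$; if $x \ge 1$ and $y \in [0,1]$ the difference is $1-y \le x-y$; and if $x \ge 1$, $y \le 0$ the difference is $1 \le x-y$. In each mixed case clamping only shrinks the gap, so the inequality holds. Collecting all cases establishes that $\phi$ is non-expansive, and combining this with $|x-y| = \alpha\,|b^m_i - b^n_i|$ gives $M \le \alpha\,|b^m_i - b^n_i|$, completing the proof. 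Note that Fact 4 is not needed for this lemma; it is presumably held in reserve for the later convergence and error-bound arguments.
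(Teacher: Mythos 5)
Your proposal is correct and is essentially the paper's own argument: the paper proves the lemma by the identical six-region case analysis on $(w-\alpha b^m_i)$ and $(w-\alpha b^n_i)$ using Facts 1--3, which is exactly your case check that the clamp $\phi(x)=\max(0,\min(1,x))$ is non-expansive, combined with $|x-y|=\alpha|b^m_i-b^n_i|$. Your framing via 1-Lipschitzness is a cleaner way to package the same computation (and you are right that Fact 4 is reserved for the later error-bound theorem), but the substance is the same.
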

\begin{proof} We prove this for six possible cases:
	\begin{enumerate}
		\item When $(w-\alpha \cdot  b^m_i)\geq 1$ and  $(w-\alpha \cdot  b^n_i)\geq
			1$: \\ Using Fact 1 we can observe that $M=0$. Since both $\alpha$
			and $|b^m_i-b^n_i|$ are positive  quantities, we can
			represent this in an inequality form as $M\leq
			\alpha\cdot |b^m_i-b^n_i|$.
		\item When $(w-\alpha \cdot  b^m_i)\geq 1$ and  $0 \leq (w-\alpha \cdot  b^n_i)
			\leq 1$: \\ Using Fact 1, we get $M=1-(w-\alpha \cdot  b^n_i)$. We know
			that $(w-\alpha \cdot  b^m_i)\geq 1$.  Therefore, $M\leq ((w-\alpha
			.  b^m_i) -(w-\alpha \cdot  b^n_i))$.  Hence $M\leq
			\alpha\cdot |b^m_i-b^n_i|$.
		\item When $(w-\alpha \cdot  b^m_i)\geq 1$ and  $(w-\alpha \cdot  b^n_i) \leq
			0$: \\ From Fact 1 and Fact 2, we get $M=1$.  Since $(w-\alpha 
		\cdot	b^m_i)\geq 1$ and $(w-\alpha \cdot  b^n_i) \leq 0$,
			$\alpha\cdot |b^m_i-b^n_i|\geq1$. Therefore
			$M\leq\alpha\cdot |b^m_i-b^n_i|$.
		\item When $0 \leq (w-\alpha \cdot  b^m_i) \leq 1$ and  $0 \leq (w-\alpha \cdot 
			b^n_i) \leq	1$: \\ Using Fact 3, we obtain $M=(w-\alpha \cdot  b^m_i)-(w-\alpha
			\cdot  b^n_i)= \alpha.|b^m_i-b^n_i|$. In other words, $M\leq
			\alpha\cdot |b^m_i-b^n_i|$.
		\item When $0\leq(w-\alpha \cdot  b^m_i)\leq1$ and  $(w-\alpha \cdot  b^n_i)\leq0$:
			\\ Using Fact 3 and Fact 2, we get $M= (w-\alpha \cdot  b^m_i)$.
			Now, let us consider a quantity $N=|(w-\alpha \cdot  b^m_i) -
			(w-\alpha \cdot  b^n_i)|$. Since, $(w-\alpha \cdot  b^n_i)\leq0$,
			therefore, $N=|(w-\alpha \cdot  b^m_i)| + |(w-\alpha \cdot 
			b^n_i)|$.  Notice that $N\geq M$ and we can rewrite $N$ as
			$\alpha \cdot |b^m_i-b^n_i|$. Therefore, $M\leq
			\alpha \cdot |b^m_i-b^n_i|$.
		\item When $(w-\alpha \cdot  b^m_i)\leq0$ and  $(w-\alpha \cdot  b^n_i)\leq0$: \\
			Using  Fact 2, we get $M=0$. Since  $\alpha\cdot |b^m_i-b^n_i|$
			is a positive quantity, therefore, $M\leq
			\alpha\cdot |b^m_i-b^n_i|$.
	\end{enumerate}
	\hfill{}
\end{proof}

\subsection{Error-bound}

In this section, we bound the error; specifically, we show that the difference
of bias of two consecutive iterations is bounded and it decreases exponentially as
we increase the number of iterations.

Suppose $b^t_i$ is the bias value of node $i$ after $t$ iterations. We prove the
following theorem on error that states that the difference is bounded by an
exponential factor of $\alpha$.

\begin{theorem}
	The difference of bias of a node at any iteration $t$ from the next
	iteration is bounded by an inverse exponential function of $t$:
	\begin{align}
	\label{eq:bound}
	|b^{t+1}_i - b^t_i| \leq {2 \cdot \alpha^{t+1}}.
	\end{align}
\end{theorem}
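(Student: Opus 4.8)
The plan is to prove the bound by induction on $t$, driven by the observation that a single composite update of the form ``bias $\to$ rating $\to$ bias'' contracts differences by a factor of $\alpha$. The Lemma is exactly the tool that supplies this factor at the rating level, and Fact~4 supplies the starting estimate. So the skeleton is: (i) express the bias gap at step $t+1$ in terms of rating gaps, (ii) express each rating gap in terms of bias gaps at the previous step using the Lemma, (iii) combine these into a uniform one-step contraction over all nodes, and (iv) iterate, seeding with Fact~4.

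First I would expand $b^{t+1}_i - b^t_i$. Since both are averages of $w_{ij} - r_j$ over the \emph{same} neighbourhood $d^+_i$, the raw weights $w_{ij}$ cancel and
\[
b^{t+1}_i - b^t_i = \frac{1}{|d^+_i|}\sum_{o_j \in d^+_i}\big(r^t_j - r^{t+1}_j\big),
\]
so by the triangle inequality $|b^{t+1}_i - b^t_i| \le \frac{1}{|d^+_i|}\sum_{o_j \in d^+_i}|r^{t+1}_j - r^t_j|$. Next I would expand each rating gap. Because $r^{t+1}_j$ and $r^t_j$ are averages over $d^-_j$ of the clipped quantities $\max\{0,\min\{1,w_{kj}-\alpha b^t_k\}\}$ and $\max\{0,\min\{1,w_{kj}-\alpha b^{t-1}_k\}\}$, the Lemma applies term by term and gives $|r^{t+1}_j - r^t_j| \le \frac{1}{|d^-_j|}\sum_{u_k \in d^-_j}\alpha\,|b^t_k - b^{t-1}_k|$.

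Substituting the rating estimate into the bias estimate produces a double average of $\alpha\,|b^t_k - b^{t-1}_k|$ whose nonnegative weights $\frac{1}{|d^+_i|}\cdot\frac{1}{|d^-_j|}$ sum to $1$. Hence the whole quantity is at most $\alpha$ times the largest bias gap present in the graph at the previous step. Writing $D^t = \max_k |b^t_k - b^{t-1}_k|$, this is precisely the uniform contraction $D^{t+1} \le \alpha\,D^t$. Seeding the recursion with Fact~4, which bounds the initial gap by $D^1 = \max_k|b^1_k - b^0_k| \le 2$, and iterating the contraction then yields the inverse-exponential decay of Eq.~\eqref{eq:bound}.

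The main obstacle, and the reason the Lemma is isolated beforehand, is the nonlinearity of the clipping $\max\{0,\min\{1,\cdot\}\}$: without it one cannot pass from a bias gap to a clean $\alpha$-contraction on the clipped ratings, and the six-case analysis in the Lemma is what certifies that the clip never \emph{amplifies} a bias difference. The second delicate point is that the recursion couples node $i$'s bias to the biases of all users who co-rate its objects, so the contraction must be phrased uniformly over all nodes through the maximum $D^t$ rather than for the single index $i$; it is exactly the cancellation of the degree factors, i.e.\ the weights summing to $1$, that lets the maximum pass cleanly through both averaging steps and keeps the contraction constant equal to $\alpha$.
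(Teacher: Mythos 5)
Your proof takes essentially the same route as the paper's: induction on $t$, with the bias gap expanded through the rating update, the Lemma supplying the $\alpha$-contraction on the clipped terms, and Fact~4 seeding the recursion; phrasing it as a uniform contraction $D^{t+1}\le\alpha D^t$ on the maximum gap is just a cleaner packaging of the paper's per-node argument.

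The one concrete discrepancy is the exponent. Seeding with $D^1=\max_k|b^1_k-b^0_k|\le 2$ and iterating $D^{t+1}\le\alpha D^t$ gives $|b^{t+1}_i-b^t_i|\le 2\alpha^{t}$, not the $2\alpha^{t+1}$ claimed in Eq.~\eqref{eq:bound}: at $t=0$ your chain delivers only the trivial bound $2$, whereas the theorem asserts $2\alpha$. To recover the extra factor of $\alpha$ one must apply the expansion-plus-Lemma step once more to the base gap $b^1-b^0$ itself, which is legitimate only if $b^0$ is itself produced by the update rule from some rating vector $r^0$; for a genuinely arbitrary random seed $b^0$ (which the paper allows), $|b^1_i-b^0_i|$ can exceed $2\alpha$ when $\alpha$ is small, so your $2\alpha^t$ is in fact the honest bound in that setting. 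The paper's own basis step silently performs this extra contraction (with some index slippage --- its displayed expansion of $b^1_i-b^0_i$ is really that of $b^2_i-b^1_i$), so the discrepancy is as much the paper's as yours; either exponent is inverse-exponential in $t$ and the downstream convergence argument is unaffected. If you want to match the stated bound exactly, state the assumption on $b^0$ explicitly and run the base case through the Lemma once.
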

\begin{proof}
	We use mathematical induction for the proof.
	\textbf{Basis}: We first prove for $t=0$.
	\begin{align*}
	&|b^{1}_i  - b^{0}_i|\\
	&= \Big|\frac{1}{|d^{+}_i|} \sum_{j \in d^{+}_i}\Big (
		\frac{1}{|d^{-}_j|} \sum_{k \in d^{-}_j}
		\Big(\max(0,\min(1,w_{kj}- \alpha \cdot b^1_k)) \\
	  &\qquad \qquad  - \max(0,\min(1,w_{kj}- \alpha \cdot  b^0_k)) \Big) 
\Big )\Big| \\
	&\leq \Big |\frac{1}{|d^{+}_i|} \sum_{j \in d^{+}_i}\Big (
		\frac{1}{|d^{-}_j|} \sum_{k \in d^{-}_j}
		\Big|\max(0,\min(1,w_{kj}- \alpha \cdot b^1_k))\\
	&\qquad \qquad -\max(0,\min(1,w_{kj}- \alpha\cdot  b^0_k))
	\Big| \Big)	 \Big| \\		
	&\leq \left |\frac{1}{|d^{+}_i|} \sum_{j \in d^{+}_i}\left (
		\frac{1}{|d^{-}_j|} \sum_{k \in d^{-}_j}
		(\alpha\cdot |b^2_k-b^1_k|)
		 \right )\right|  \mbox{       [Using Lemma 1]}\\
	&\leq \left |\frac{1}{|d^{+}_i|} \sum_{j \in d^{+}_i}\left (
		\frac{1}{|d^{-}_j|} \sum_{k \in d^{-}_j}
		(\alpha\cdot 2)
		 \right )\right| \leq 2\cdot \alpha \qquad \mbox{[Using Fact 4]}
	\end{align*}
	\textbf{Induction step}: We assume the bound to be true for $b^t_i$ in
	the $t^\text{th}$ iteration, i.e., $b^t_i - b^{t-1}_i \leq 2\cdot  \alpha^t$.  In the $(t+1)^\text{th}$ iteration,
	\begin{align*}
	&|b^{t+1}_i  - b^{t}_i| \\
	&=\Big| \frac{1}{2|d^{+}_i|} \sum_{j \in d^{+}_i}\Big(\frac{1}{|d^{-}_j|} \sum_{k \in d^{-}_j}
		\max(0,\min(1,w_{kj}+ \alpha \cdot b^{t}_k))\\
	& \qquad \qquad  -\max(0,\min(1,w_{kj}+ \alpha\cdot  b^{t-1}_k)))
		 \Big )\Big| \\		
	&\leq \left |\frac{1}{|d^{+}_i|} \sum_{j \in d^{+}_i}\left (
		\frac{1}{|d^{-}_j|} \sum_{k \in d^{-}_j}
		(\alpha\cdot |b^{t}_k-b^{t-1}_k|)
		 \right )\right| \\
	&\leq \left |\frac{1}{|d^{+}_i|} \sum_{j \in d^{+}_i}\left (
		\frac{1}{|d^{-}_j|} \sum_{k \in d^{-}_j} (\alpha\cdot 2 \cdot (\alpha)^t)
		\right )\right| \quad \mbox{[Induction assumption]}\\
	& = 2\cdot \alpha^{t+1}
	\end{align*}
	Thus, the error is bounded by an inverse exponential function of $t$.
\end{proof}

\subsection{Proof of Convergence}
\label{sec:convergence}
Using the above error bounds, we can observe that difference of two consecutive
iterations decrease  as we increase the number of iterations.  For any
$\epsilon > 0$, we have $t=\log_{1/ \alpha}(\frac{2}{ \epsilon})$ such that for any two iterations $m,n>t$, we have:
\begin{align*}
|b^{m} - b^n| \leq \epsilon
\end{align*}
The above sequence is a Cauchy sequence and thus converges.
\subsection{Proof of Uniqueness}
\label{sec:uniqueness}

We next establish the proof for the uniqueness of the system.  We
will prove it by contradiction. Assume that there are at least two values of
bias that
satisfy the bias equation.  Suppose $b^1_i$ and $b^2_i$ be two converged
values of node $i$ and their difference is $\delta(i)=b^1_i-b^2_i$.  Suppose,
$M$ denotes the largest such difference, i.e.,
$M=\max_i\delta(i)$.  If we can show
that $M$ is zero, then uniqueness is proved. In other words, there exist a single
solution to the bias equation.

\begin{theorem}
	The bias of a node converges to a unique value.
\end{theorem}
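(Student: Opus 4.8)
The plan is to prove the stated theorem by showing that the quantity $M$ is forced to be zero, since by the preceding setup this immediately gives $b^1_i = b^2_i$ at every node and hence a single solution. I will work with the node attaining the largest magnitude difference; that is, let $i^\star$ be such that $M = |\delta(i^\star)| = \max_i |\delta(i)|$.

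First I would write the fixed-point form of the bias equation that any converged solution must satisfy, obtained by substituting the true-rating update into the bias update:
\begin{align*}
b_i = \frac{1}{|d^+_i|}\sum_{j \in d^+_i}\Big(w_{ij} - \frac{1}{|d^-_j|}\sum_{k \in d^-_j}\max\{0,\min\{1,w_{kj}-\alpha \cdot b_k\}\}\Big).
\end{align*}
Applying this to both converged solutions $b^1$ and $b^2$ and subtracting, the rating-independent terms $w_{ij}$ cancel, leaving the difference expressed purely through the clamped rating terms:
\begin{align*}
\delta(i) = -\frac{1}{|d^+_i|}\sum_{j \in d^+_i}\frac{1}{|d^-_j|}\sum_{k \in d^-_j}\Big(\max\{0,\min\{1,w_{kj}-\alpha \cdot b^1_k\}\} - \max\{0,\min\{1,w_{kj}-\alpha \cdot b^2_k\}\}\Big).
\end{align*}

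Next I would take absolute values, move them inside the two averages by the triangle inequality, and apply Lemma~1 termwise to each clamped difference, which replaces every inner term by $\alpha \cdot |b^1_k - b^2_k| = \alpha \cdot |\delta(k)|$. Bounding each $|\delta(k)|$ by $M$ and observing that the nested normalized sums are genuine averages, so that $\frac{1}{|d^-_j|}\sum_{k \in d^-_j} 1 = 1$ and then $\frac{1}{|d^+_i|}\sum_{j \in d^+_i}1 = 1$, the weights collapse and yield, for every node $i$,
\begin{align*}
|\delta(i)| \leq \frac{1}{|d^+_i|}\sum_{j \in d^+_i}\frac{1}{|d^-_j|}\sum_{k \in d^-_j}\alpha \cdot |\delta(k)| \leq \alpha \cdot M.
\end{align*}
Evaluating this at $i^\star$ gives $M \leq \alpha \cdot M$, i.e. $(1-\alpha)M \leq 0$. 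Since $M \geq 0$ and $\alpha \in (0,1)$ forces $1-\alpha > 0$, the only possibility is $M = 0$, so $\delta(i) = 0$ for all $i$ and the two solutions coincide.

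The only real obstacle is the middle step: justifying that Lemma~1 applies termwise and that the two layers of degree-normalized summation contract rather than amplify the differences. This is exactly where the hypothesis $\alpha < 1$ becomes essential, mirroring the spectral-radius condition flagged in the matrix-algebra discussion, because without the strict contraction factor $\alpha$ the final inequality $M \leq \alpha M$ would be vacuous and $M$ could be positive. One minor point of care is that the paper's $M = \max_i \delta(i)$ must be read as the maximal absolute difference for the contradiction to close; selecting the maximizing node $i^\star$ by magnitude handles this cleanly.
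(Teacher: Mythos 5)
Your proposal follows essentially the same argument as the paper: substitute the rating update into the bias fixed-point equation, cancel the $w_{ij}$ terms, apply Lemma~1 termwise, collapse the two degree-normalized averages, and derive $M \leq \alpha M$, forcing $M=0$. Your remark that $M$ should be taken as the maximum \emph{absolute} difference is a sensible tightening of the paper's slightly loose definition $M=\max_i \delta(i)$, but it does not change the substance of the proof.
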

\begin{proof}
	\begin{align*}
	&M = 
	\Big| \frac{1}{2|d^{+}_i|} \sum_{j \in d^{+}_i}\Big(\frac{1}{|d^{-}_j|} \sum_{k \in d^{-}_j}
		\max(0,\min(1,w_{kj}- \alpha \cdot b^{1}_k))\\
	&\qquad\qquad\qquad-\max(0,\min(1,w_{kj}- \alpha \cdot b^{2}_k))
		\Big )\Big| \\
	&\leq \Big |\frac{1}{|d^{+}_i|} \sum_{j \in d^{+}_i}\Big (
		\frac{1}{|d^{-}_j|} \sum_{k \in d^{-}_j}
		\Big|\max(0,\min(1,w_{kj}- \alpha \cdot b^{1}_k))\\
	&\qquad\qquad\qquad-\max(0,\min(1,w_{kj}- \alpha \cdot b^{2}_k))
		 \Big |\Big)\Big| \\		
	&\leq \left |\frac{1}{|d^{+}_i|} \sum_{j \in d^{+}_i}\left (
		\frac{1}{|d^{-}_j|} \sum_{k \in d^{-}_j}
		(\alpha\cdot |b^{1}_k-b^{2}_k|)
		 \right )\right|     \mbox{[Using Lemma 1]}\\
	&\leq \left |\frac{1}{|d^{+}_i|} \sum_{j \in d^{+}_i}\left (
		\frac{1}{|d^{-}_j|} \sum_{k \in d^{-}_j}
		(\alpha\cdot |b^{1}_p-b^{2}_p|)
		 \right )\right|   \mbox{[By definition ]}\\
	&= \left |\frac{1}{|d^{+}_i|} \sum_{j \in d^{+}_i}\left (
		\frac{1}{|d^{-}_j|} \sum_{k \in d^{-}_j}
		(\alpha\cdot M)
		 \right )\right| \\
	&= \alpha\cdot M
	\end{align*}
	Since $\alpha \leq 1$ and M is a positive quantity, inequality $M \leq
	\alpha\cdot M$ holds when $M=0$.  This confirms uniqueness.
\end{proof}

\begin{table}[t]
\begin{center}\small
\begin{tabular}{|c|c|c|c|}
\hline
\multirow{1}{*}{Bin} & {Number of ratings} & \multicolumn{2}{c|}{ Count of  movies } \\
\cline{3-4}
& & Dataset 1 & Dataset 2 \\
\hline
1 & 1 & 114 & 13\\
2 & 2-3 & 131 & 26 \\
3 & 4-7 & 142 & 40\\
4 & 8-15 & 204 & 76\\
5 & 16-31 & 311 & 109\\
6 & 32-63 & 463 & 216\\
7 & 64-127 & 508 & 258\\
8 & 128-255 &  633 & 363\\
9 & 256-511 & 598 &  366\\
10 & 512-1023 &  406 & 273\\
11 & $>$1023 &  196 & 122\\
\hline
& Total &  3706 & 1862\\
\hline
\end{tabular}
\caption{Grouping of movies based on the number of ratings received.}
\label{tab:bin}
\end{center}
\end{table}

\begin{figure}[t]
	\centering
	\includegraphics[width=0.85\columnwidth]{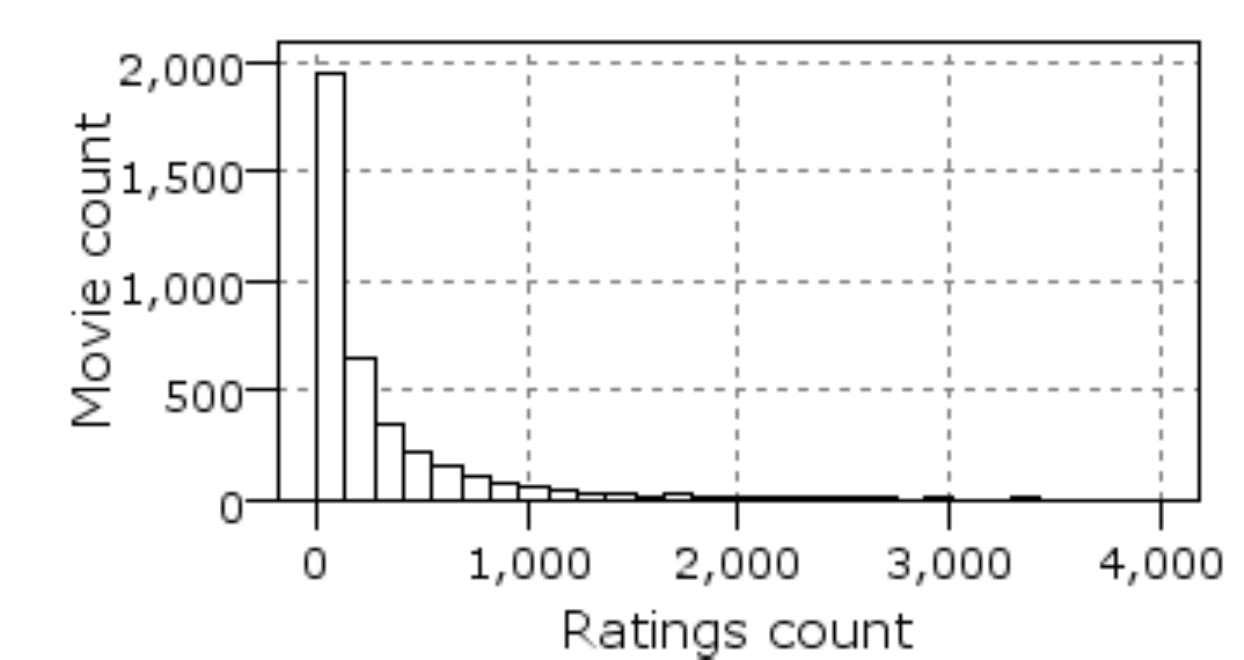}
	\caption{Distribution of number of movies with number of ratings received for Dataset 1. Most movies received very few ratings.}
	\label{fig:rating}
\end{figure}

\section{Experiments}
\label{sec:exp}

\subsection{Dataset description}

We use two datasets for our experiments. They are available from
\url{http://www.grouplens.org}.  The first dataset from
\url{www.grouplens.org/node/73} consists  of roughly 1 million movie ratings
given by around 6,000 users to around 3,700 movies.  The second dataset from
\url{http://www.grouplens.org/node/462} contains  extra information about
the movies such as ratings given by the top critics.  Thus we have both user
ratings and the net score given by the top critics.  We call the first dataset
as Dataset 1 and the second dataset as Dataset 2 throughout the paper.

In both the datasets, most of the movies receive only a few ratings.  For
example, in Figure~\ref{fig:rating}, we show the histogram of movies  over the
number of ratings they receive for Dataset 1.  We can notice that movies with
few ratings dominates the set. In Dataset 2, we observe similar behavior as it
is essentially a subset of Dataset 1 with extra information such as top critics
ratings. To gain more insights, we divide the two datasets into different bins
based on the number of ratings a movie receive.  This helps in  analyzing the
rating data in much greater detail. We divided all the movies in 11 groups (or
bins) based on the number of ratings they receive.  While binning, we
particularly focused on those movies which received fewer ratings.
Table~\ref{tab:bin} shows the distribution.  In both these datasets, we
normalize the movie ratings in the range $0$ to $1$, with $0$ implying the
lowest score and $1$ the highest.

\begin{figure}[t]
	\centering
	\includegraphics[width=0.7\columnwidth]{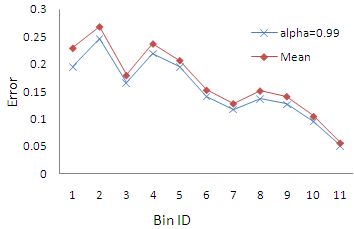}
	\caption{Mean Square Error for techniques on the Dataset 2.}
	\label{fig:errormean}
	\includegraphics[width=0.7\columnwidth]{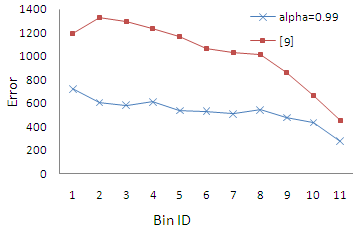}
	\caption{Absolute error in ranking computed over the Dataset 2.}
	\label{fig:errorrank}
\end{figure}
\subsection{Performance Evaluation}
\label{sec:eval}

In this section, we report the results against the baseline and existing techniques. Later (Section~\ref{sec:qual}) we analyze the 
behavior of our algorithm qualitatively.

We consider the top critics rating as the ground truth, and we compare
the various techniques against this metric. We compute the mean square error (MSE) and  error in ranking.
In  MSE, we simply look at the deviation in ratings returned by different techniques with respect to the
top critics rating. It has been shown that the critics score correlate well with the box office success of a movie in 
a long term, not so much with early box office receipts \cite{criticsboxoffice}. Therefore, it is fit to serve as a ground truth. 
To the best of our knowledge, Dataset 2 is the only publicly available dataset with critics ratings, and therefore in this set of experiments,
we use  this dataset alone.

 We show that our algorithm adheres well
with the critics' rating than the baseline method and a representative algorithm \cite{sdm07}. Mean rating is currently the most
widely used approach in most systems with small filtering mechanisms such as
frequent users. Therefore, it is a good candidate for  a baseline. The algorithm in \cite{sdm07}, similar to our algorithm,  exploits the global structure of
the graph rather than focusing locally.  The original algorithm suffers from the
divide-by-zero problem and the authors suggested using a small value; so, for
our comparison, we used $0.01$.

Note that the algorithm proposed  in~\cite{sdm07} does not converge and the
final values (which are out of scale) depend on the initial seeds.  For a fair comparison with this
algorithm, we consider another metric---error in ranking. Initially, we sort
the movies based on the top critics score to know their ranks and then we simply
look at the average absolute distance between the ranking returned by the
technique (our algorithm and \cite{sdm07}) and critics ranking.

The mean square error with mean rating on Dataset 2 is $0.142$ while our
algorithm ($\alpha=0.99$) reduces it to $0.129$.  In later sections, we analyze
the behavior of our algorithm for different values of $\alpha$. For now, we
assume that we are trying to remove as much bias as possible, and therefore, we
keep a high value for $\alpha$ (such as $0.99$). To gain more insight, we plot
the error across different bins Table~\ref{tab:bin} for these two techniques in
Figure \ref{fig:errormean}. We can observe that our algorithm gives superior
results across all the bins, especially in the ones that receive few ratings.
We can also observe that the error reduces as the number of ratings per bin
increases. This phenomenon is explained in detail in
Section~\ref{sec:numrating}.

Algorithm proposed in \cite{sdm07} does not converge, therefore we compare the
ranking returned by our algorithm against those as authors \cite{sdm07} originally suggested.  In Figure
\ref{fig:errorrank}, we can notice that our algorithm outperforms the 
algorithm. Also, the error in ranking decreases as the number of ratings
increases.

\begin{figure}[t]
\centering

\subfigure[Bias distribution for $\alpha=0.99$.]{
\includegraphics[scale=0.5]{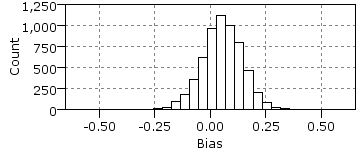}
\label{fig:subfig1}
}
\subfigure[Bias distribution for~\cite{sdm07}.]{
\includegraphics[scale=0.5]{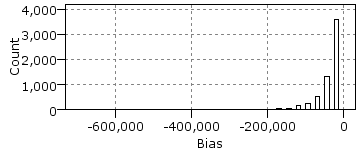}
\label{fig:bias1}
}
\caption{Distribution of bias.}
\label{fig:bias}
\end{figure}

\begin{figure}
\centering
\subfigure[True rating distribution for $\alpha=0.99$.]
{
\includegraphics[scale=0.5]{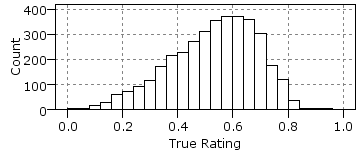}
\label{fig:subfig1}
}
\subfigure[True rating distribution for~\cite{sdm07}.]
{
\includegraphics[scale=0.5]{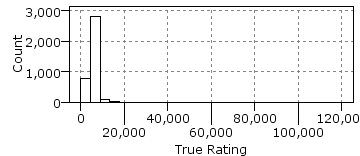}
\label{fig:rating1}
}
\caption{Distribution of true ratings.}
\label{fig:rating1}
\end{figure}

\subsection{Qualitative Evaluation}
\label{sec:qual}

In this section, we look at other properties of our algorithm, mean rating and
the algorithm
proposed in~\cite{sdm07} such as the distribution of bias, effect of number of
ratings a movie receives,
convergence analysis of the algorithm, and comparison of relative rankings. As Dataset 2 is simply a smaller subset of
Dataset 1 with the additional critics information, we now show results directly on the Dataset 1.

\begin{figure*}
\centering
\subfigure[Variation of ratings across bins.]
{
\includegraphics[scale=0.65]{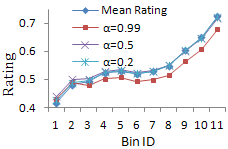}
\label{fig:variation}
}
\subfigure[Deviation of ratings across bins.]
{
\includegraphics[scale=0.65]{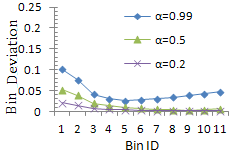}
\label{fig:deviation}
}
\subfigure[Relative deviation of ratings across bins.]
{
\includegraphics[scale=0.65]{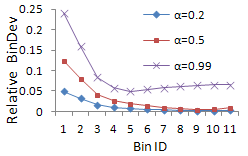}
\label{fig:relative}
}
\caption{Comparison of ratings across different bins.}
\label{fig:numrating}
\end{figure*}
\subsubsection{Distribution of bias and true rating}

In this experiment, we look at the distribution of bias and true rating obtained
by running our algorithm and algorithm in~\cite{sdm07}.
Figure~\ref{fig:bias} shows the distribution of bias at the end of $10$
iterations. Our algorithm gives a bell curve type distribution.
Note that the algorithm in~\cite{sdm07} did not converge.  We observed small changes in the bias distribution 
for different values of $\alpha$ such as $0.2$, $0.5$ and $0.99$.
The bias values obtained by the algorithm in~\cite{sdm07} are all negative and out of scale.  Our bias values, on the other hand,
quantify the effect by which the rankings etc. are shifted, and are therefore,
more close to actual biases.

Figure~\ref{fig:rating1} shows similar distribution of true ratings at the end of
$10$ iterations for our method and the algorithm in~\cite{sdm07}.  
Even with high $\alpha$ our algorithm gives meaningful value of true rating.
  Once again,
the actual values of true ratings obtained using the algorithm in~\cite{sdm07}
are useful only for relative ranking purposes and the absolute values do not
convey anything.  The ratings obtained by our method are more indicative of the
actual ratings.

\subsubsection{Effect of number of ratings}
\label{sec:numrating}
As discussed earlier, there are many movies for which the number of ratings is
quite less.  There are many reasons for fewer ratings.  A movie that has been
marked as ``poor'' by a user is likely to be watched by less number of people
and will thus have only a few ratings.  On the other hand, when a movie is rated
``excellent'', more people watch/buy it, and the movie receives more ratings.
This behavior is loosely termed as \emph{purchasing bias}.  Thus, we expect a movie
with a large number of reviews to have high ratings, and a movie which is rated
by few users to be lowly rated in general.  Experiments also reveal such
behavior.

Figure~\ref{fig:variation} shows the variation of the true ratings obtained
using different $\alpha$ values.  It also plots the mean of the original
ratings.  As predicted by the purchasing bias, movies with more ratings
generally have higher values.    We
did not plot the ratings obtained using the algorithm in~\cite{sdm07}, as they
are very high and out of scale and their absolute values do not signify
anything.  Instead, later we compare with the relative ranking as the authors
in~\cite{sdm07} intend to do.
 
The next graph, Figure~\ref{fig:deviation}, plots the deviations for each bin.
It captures the average difference of the true ratings from the mean rating.
The deviation of bin $k$ is
\begin{align*}
	bindev(k) = \frac{1}{|bin(k)|} \sum_{o_j \in bin(k)} |r_j - meanRating_j|
\end{align*}
where $|bin(k)|$ is the number of movies in bin $k$ and $meanRating_j$ is the
mean rating of movie $o_j$.

The deviation is high for movies that receive fewer ratings and is much low for
movies that are rated by many users.  This can be explained by the ``law of
large numbers'' that states that the average of a large number of values is
closer to the expected value.  The second important observation is the fact that
the deviation is higher for larger $\alpha$.  The parameter $\alpha$ denotes the
quantity that is removed from the mean rating.  When $\alpha = 0$, the user is
never biased, and the true rating is equal to the mean rating.  With a larger
amount of shift, the deviation is larger as well.  Finally,  it seems
that the deviation in later bins (those with more ratings) are higher.
 It is because average true rating also increases in the later bins. However, we do 
not expect  the \emph{relative} deviation to be as large.

To understand this behavior better, we next compute the relative deviation in
each bin as the ratio of deviation to the true rating:
\begin{align*}
	relbindev(k) = \frac{1}{|bin(k)|} \sum_{o_j \in bin(k)} \frac{|r_j - meanRating_j|}{r_j}
\end{align*}
Figure~\ref{fig:relative} shows the graph.

Indeed, the relative deviations in bins with fewer ratings are much higher.
While for movies having less than 4 ratings, the relative deviation can be as
high as 25\% , for those with more ratings, it stabilizes to 1\%.  Even
when $\alpha$ is very high ($0.99$), it is around 5\% only.

\begin{table}
\begin{center}\small
\begin{tabular}{|c||c||c|c|c||c|}
\hline
\multirow{2}{*}{Movie ID} & Mean & \multicolumn{3}{c||}{$\alpha = $} & \multirow{2}{*}{\cite{sdm07}} \\
\cline{3-5}
& Rating & $0.2$ & $0.5$ & $0.99$ & \\
\hline
\hline
3280&1 & 12 & 101 & 1417 & 4 \\
\hline
3233&2 & 3 & 3 &4 & 2021\\
\hline
1830&3 & 7 & 7& 9&19\\
\hline
3881&4 & 9 & 10 & 33&1641\\
\hline
3656&5 & 8 & 8 & 15& 2688\\
\hline
787&6 & 5 & 5 & 5& 1152\\
\hline
3607&7 & 6 & 6 & 10&10\\
\hline
3172&8 & 4 &  4 & 6&3456\\
\hline
3382&9 & 1 & 1 & 1& 1\\
\hline
989&10 & 2 &  2 & 3&3369\\
\hline
\end{tabular}
\caption{Top ranked movie and their IDs upon sorting by mean rating.}
\label{tab:rankmean}

\begin{tabular}{|c||c|c|c||c||c|}
\hline
\multirow{2}{*}{Movie ID}  & \multicolumn{3}{c||}{$\alpha = $} & \multirow{2}{*}{Mean} &\multirow{2}{*}{\cite{sdm07}} \\
\cline{2-4}
 & $0.99$ & $0.2$ & $0.5$ & Rating & \\
\hline
\hline
3282  &1  & 1 & 1 &  9 & 1 \\
\hline
557   &2  & 15& 9 &  23    & 5\\
\hline
989   &3  & 2 & 2 & 10    &3369\\
\hline
3233  &4  & 3 & 3 & 2   &2021\\
\hline
787   &5  & 5 & 5 &  6   & 1152\\
\hline
3172  &6  & 4 & 4 &  8    & 3456\\
\hline
578   &7  & 17& 14&  25   &9\\
\hline
2503  &8  & 13& 13 &13    &34\\
\hline
1830  &9  & 7 & 7 & 3    & 19\\
\hline
3607  &10 & 6 & 6 & 7    &10\\
\hline
\end{tabular}
	\caption{Top ranked movie and their IDs upon sorting by true rating for $\alpha=0.99$.}
	\label{tab:rank99}

\begin{tabular}{|c||c|c|c||c||c|}
\hline
\multirow{2}{*}{Movie ID}  & \multicolumn{3}{c||}{$\alpha = $} & \multirow{2}{*}{Mean} &\multirow{2}{*}{\cite{sdm07}} \\
\cline{2-4}
 & $0.2$ & $0.5$ & $0.99$ & Rating & \\
\hline
\hline
2858    & 79   & 91  &  111  & 76 & 209 \\
\hline
260     & 35   &  35 &   50 & 34    & 208\\
\hline
1196    & 98   & 108 & 149  & 91    &318\\
\hline
1210    & 377  & 407 &   486 & 345   &616\\
\hline
480     & 889  & 936 & 1016  & 861   & 1080\\
\hline
2028    & 67   & 77  & 107   & 64    & 316\\
\hline
589     & 320  & 354 &  417  & 300   &594\\
\hline
2571    & 82   & 96  & 126  &77    &205\\
\hline
1270    & 429  & 455 &  529  &438    & 838\\
\hline
593     & 60   & 62  & 76  & 60    &277\\
\hline
\end{tabular}
	\caption{Movie are sorted (descending) based on the number of ratings they received.}
	\label{tab:ranktopvotes}

\begin{tabular}{|c||c|c|c||c||c|}
\hline
\multirow{2}{*}{Movie ID}  & \multicolumn{3}{c||}{$\alpha = $} & \multirow{2}{*}{Mean} &\multirow{2}{*}{\cite{sdm07}} \\
\cline{2-4}
 & $0.2$ & $0.5$ & $0.99$ & Rating & \\
\hline
\hline
127    & 3674   & 3634  &  3225  & 3678 & 3686 \\
\hline
133     & 3670   &  3544  &  2355 & 3679    & 3696\\
\hline
139    & 450   & 529 & 709   & 374    &3555\\
\hline
142    & 3696  & 3691 &   3687   & 3684   &3688\\
\hline
226     & 3529  & 3536 &  3573  & 3520   & 3551\\
\hline
286    & 2369   & 2145  &  1727     & 2464    & 94\\
\hline
311     & 2358  & 2112 &  1652   & 2466   &3477\\
\hline
396    & 440   & 503  &  654    &376    &38\\
\hline
398    & 366  & 333 & 286    &377    & 3380\\
\hline
402     & 2400   & 2216  &   1857    & 2543    &37\\
\hline
\end{tabular}
	\caption{Movies which received only $1$ vote.}
	\label{tab:rankfewvotes}
\end{center}
\end{table}

\subsubsection{Comparison of relative ratings}

A ranked solution has also been proposed by the authors of~\cite{sdm07}. They have shown that
relative ranking could still be preserved even if the algorithm does not converge.
  In this section, we compare the ranking of
movies based on the ratings obtained using the algorithm in~\cite{sdm07} and our
algorithm.

The first table, Table~\ref{tab:rankmean}, ranks the movies according to the
mean rating they received originally.  When $\alpha$ is small, there is little
effect of bias, and our rankings closely resemble that of mean rating.  
In general, we do not expect a movie with a high mean rating to become a very 
bad movie after removing bias. There will be some difference and as $\alpha$
increases, such differences may increase.
Still, we do not expect a dramatic change. 
Even for very large $\alpha$ ($0.99$), the list obtained is better than that
using~\cite{sdm07}.  Many of the top-10 movies in that list are ranked in
thousands as opposed to only one such case for our algorithm.  In general, our
results are more stable.

In the next table, Table~\ref{tab:rank99}, we first sort the movies according to
the ratings obtained using $\alpha = 0.99$, and then tabulate their rankings
using the other schemes.  Once more it can be observed that all the ranked lists
are similar except for the algorithm in~\cite{sdm07} as it shows instabilities.

Movies that are rated by many users generally receive higher ratings.  In
Table~\ref{tab:ranktopvotes}, movies are sorted in a descending order according
to the number of ratings they have received.  Most largely rated movies are
indeed highly rated.  All the algorithms including the one from~\cite{sdm07}
rank them high (within 15\%, but mostly within 10\%).

It is, however, more interesting to analyze how the rankings behave for movies
that receive very few ratings.  The final table, Table~\ref{tab:rankfewvotes},
show the results for movies that have received only a single vote.  We have
shown only 15 movies sorted according to their IDs.  As indicated by the
purchasing bias, these movies are poorly rated, both according to mean rating
and our algorithm.  The algorithm in~\cite{sdm07}, shows some inconsistent
results and ranks certain poorly rated movies very highly (e.g., movie IDs 286,
396, and 402).  Our algorithms show more meaningful and stable ranking.

\begin{figure}
\centering
\subfigure[True Rating.]
{
\includegraphics[scale=0.50]{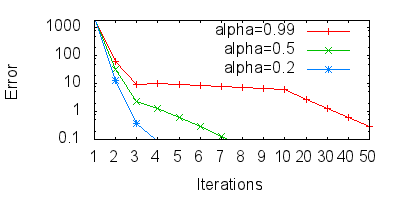}
\label{fig:errorrating}
}
\subfigure[Bias.]
{
\includegraphics[scale=0.50]{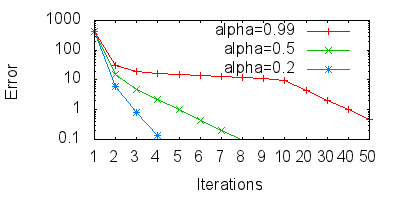}
\label{fig:errorbias}
}
\caption{Convergence plot: error versus number of iterations.}
\label{fig:error}
\end{figure}  
\subsubsection{Convergence}

In this set of experiments, we empirically measure the rate at which our
algorithm converges.  Figure~\ref{fig:error} shows the rate of convergence for
both true rating and bias.  The error is measured as the \emph{sum} of
differences in bias (or true rating) values over all nodes between two
consecutive iterations. The algorithm in~\cite{sdm07} did not converge in our setting.

We choose $0$ as the initial seed value of bias for all nodes.  In the first
iteration, this value is used to compute true ratings and in turn the new bias
values.  Since seed bias is $0$, bias computation is independent of $\alpha$ for
first iteration.  From second iteration onward, $\alpha$ starts to make a
difference in the convergence rate of bias and true rating.

In the graphs, we have plotted the error values up to 50 iterations.  We use the $L_1$ norm
to compute error.
Note that an error value, say $0.1$, as shown in the graph is the sum of error values over
3,000 nodes.  Consequently, the error for a single node is very  low.

The rate of convergence is slower for a larger $\alpha$.  When $\alpha \to 1$,
the error in each node can be as large, and therefore, the rate
of convergence will be  slow.  On the other hand, for a small $\alpha$, the
values converge very rapidly, and the algorithm is very  practical.

Since it has been shown that the error falls exponentially, the stopping
criterion can also be chosen to be when the difference in values between two
consecutive iterations becomes less than a threshold $\epsilon$.  The number of
iterations is a logarithmic function in the inverse of the threshold $\epsilon$.

\section{Conclusion}
\label{sec:conc}
In this work, we have proposed a novel technique to compute the bias of users
and consequently true ratings for products. People have different views when
they rate a product and it is necessary to capture their bias. By factoring
bias, we obtain true ratings of a product.  The bias and true rating values
computed by our algorithm are meaningful and can be associated directly with the
ratings user provide.  Our algorithm is iterative, fast and has several nice
properties including convergence to a unique solution and bounded errors.  The
maximum number of required iteration can also be fixed apriori depending on the
level of accuracy required. In experiments, we observe that our technique
produces consistent and good results. In future, we would like to analyze the 
change in user biases over time.  For example, users tend to give high ratings right after a movie release.

\bibliographystyle{abbrv}
\bibliography{sigir}




\end{document}